\def\ba #1\ea{\begin{align} #1 \end{align}}
\def\bas #1\eas{\begin{align*} #1 \end{align*}}
\newcommand{\R}{\mathbb{R}}
\newcommand{\C}{\mathbb{C}}
\newcommand{\h}{\mathcal{H}}
\newcommand{\A}{\mathcal{A}}
\newcommand{\Spec}{\textrm{Spec }}
\newcommand{\Hom}{\textrm{Hom}}
\newcommand{\g}{\mathfrak{g}}
\newtheorem{thm}{Theorem}[section]
\newtheorem{corr}[thm]{Corrollary}
\newtheorem{prop}[thm]{Proposition}
\theoremstyle{remark}
\newtheorem{remark}{Remark}
\theoremstyle{definition}
\newtheorem{dfn}{Definition}
\begin{document}

\title{Dynkin operators, renormalization and the geometric $\beta$ function}

\author{Susama Agarwala}
\address{Hamburg University}
	  \email{susama.agarwala@math.uni-hamburg.de}
 	  \urladdr{http://www.its.caltech.edu/$\sim$susama/} 

\begin{abstract}
In this paper, I show a close connection between renormalization and a
generalization of the Dynkin operator in terms of logarithmic
derivations. The geometric $\beta$ function, which describes the
dependence of a Quantum Field Theory on an energy scale defines is
defined by a complete vector field on a Lie group $G$ defined by a
QFT. It also defines a generalized Dynkin operator.

\end{abstract}

 \maketitle

\section{Introduction}
The Dynkin operator has recently become an important object in the
study of dynamical systems. The classical Dynkin operator defines a
bijection from a Lie group to it's Lie algebra, the inverse of the
exponential map. It is key in the closed form exapansion of the Baker-Campbell-Hausdorff formula. In \cite{MP}, the authors generalize Dynkin operators
in terms of logarithmic derivatives on a Lie algebra, and connect it
to Magnus-type formulas. The classical Magnus formula provides a
solution to the system of differential equations of the form \ba X'(t)
= A(t) X(t) \label{diffeq}\;.\ea Systems of this form appear in the
study of renormalization of quantum filed theories (QFTs). In
\cite{CK01}, the authors define a $\beta$ function, a Lie algebra
element representing how a dimensionally regularized QFT depends on
the energy scale. The $\beta$ function for dimensional regularization
and momentum cutoff regularization satisfies an equation of the form
\eqref{diffeq}. In \cite{CM06, thesis1}, the authors show that this
$\beta$ function defines a connection that also satisfies
\eqref{diffeq}. In this note, I show that there is a much deeper
connection between the Dynkin operator and renormalization.

As in the literature on the Hopf algebraic approach to
renormalization, intiated by \cite{CK00}, consider a regularized
perturbative Quantum Field Theory (QFT), $\phi$, as a map from Feynman
diagrams to an algebra $\A$. The divergence structure of the Feynman
diagrams is encoded in a Hopf algebra $\h$, as initially introduced by
Connes and Kreimer in \cite{CK00}.  I wish to keep the discussion in
the paper general, but for specific examples, one can consider the
Hopf algebra structure on scalar field theories, developed in
\cite{CK00}, on QED developed in \cite{vS07}, on gauge theories
developed in \cite{K06ag}. The algebras in all these cases have been
the algebra of formal Laurent series, $\A = \C[z^{-1}][[z]]$
\cite{cutoffdimreg}. However, if one is interested in momentum cut-off
renormalization, $\A = \C[\log z, z^{-1}][[z]]$ is appropriate
\cite{cutoffdimreg}. For a scalar field theory over a curved, compact
Euclidean background, use $\A =
\mathcal{D}'(M)[z^{-1},z]]$\cite{thesis2}.

In section \ref{geombeta}, I generalize the $\beta$ function defined
in \cite{CK01, thesis2, cutoffdimreg}. I generalize regularized
Feynman rules as elements of an affine Lie group associated to a Hopf
algebra, $\h$. The action of the renormalization scale generalizes to
a flow on this group. Specifically, it defines a one parameter family
of diffeomorphisms. The $\beta$ function defining the action of the
renormalization scale action is the vector field of the flow pulled
back to the Lie algebra. In section \ref{Dynkin}, I recall the Dynkin
operator, $D: T(V) \rightarrow V$, a map from a tensor algebra to the
underlying Lie algebra that defines a map from the Lie group $G =
exp(V)$ to $V$. In \cite{EGP}, the authors showed that the $\beta$
function of \cite{CK01} can be written as a variation of this map on
$G$. I generalize this map for the class of geometric $\beta$ function
defined in section \ref{geombeta}. In \cite{MP}, the authors define a
generalization of the classical Dynkin operator using logarithmic
derivatives with regards to a Lie derivative. I show that the
geometric $\beta$ function, as defined in \cite{cutoffdimred},
is compatible with the Dynkin variant defined in \cite{EGP}.

\section{The perturbative $\beta$ function\label{geombeta}}

The literature on renormalization theory is often confusing because of
different nomenclature referring to slightly different things in
different parts of the community. To avoid this confusion, I use this
section to set up a dictionary of what I mean when I use different terms
commonly found in the physics literature, and what mathematical
generalizations they correspond to. In this way, I motivate why the
definition of a geometric $\beta$ function is the appropriate object
of study.

\begin{dfn}
The Hopf algebra of Feynman diagrams, $\h$ is a commutative Hopf
algebra over a field $k$ of characteristic $0$ associated to the
Feynman diagrams for some QFT, as originally constructed in
\cite{CK00}. The Hopf algebra is constructed to encode the
subdivergence structure of the Feynman integrals in a manner that is
compatible with BPHZ renormalization.
\end{dfn}

Recall a few useful properties of a Hopf algebra of Feynman
diagrams. The Hopf algebra $\h$ is generated by all 1PI graphs of the
QFT. For more details on this Hopf algebra, see \cite{CK00, CMbook,
  thesis1}, The coproduct of a graph $\Gamma \in \h$ is given by \bas
\Delta(\Gamma) = \sum_{\substack{\gamma \subseteq \Gamma \\ \gamma,
    \Gamma/\gamma \in \h} } \gamma \otimes \Gamma/\gamma \;,\eas where
$\Gamma/\gamma$ is the obtained from $\Gamma$ by the contraction of
the connected components of $\gamma$ to a point. This coproduct
encodes the divergence structure found in BPHZ
renormalization. Multiplication of graphs is given by disjoint
union. The counit is written \bas \varepsilon ( h) = \begin{cases} h &
  \mbox{h } \in \h_0 \\ 0 & \mbox{else}. \end{cases}\eas The Hopf
algebra is graded by loop number, with the grading operator $Y
(\Gamma) = n \Gamma$ if $\Gamma$ has $n$ loops. The antipode is
defined recursively as \bas S(\Gamma) = -\Gamma -
\sum_{\substack{\gamma \subseteq \Gamma \\ \gamma, \Gamma/\gamma \in
    \h} } S(\gamma) \Gamma/\gamma \;.\eas

The coproduct structure on $\h$ induces a convolution product on the
associated affine group scheme $G = \Spec \h$. For a given $k$-algebra
$\A$, the Lie group $G(\A) = \Hom_{k\; alg}(\h, \A)$. That is, for $g,
g' \in G(\A)$ and $\Gamma \in \h$, \bas g \star g' (\Gamma) = (g
\otimes g') (\Delta \Gamma) \;.\eas Note that $\h \simeq k[G]$, the
ring of regular functions on $G$.

\begin{dfn}
In this paper, the renormalization group is $G$. It is the group of
evaluations of the Hopf algebra of the QFT $\h$.
\end{dfn}

Given a QFT, there are well established Feynman rules that assign a
divergent integral to each Feynman diagram. Given a regularization
scheme, the regularized Feynman rules assign to each diagram a
integral that evaluates into some algebra $\A$. This is a linear
map. If $\A$ is a $k$-algebra, the regularized Feynman rules define an
algebra homomorphisms from $\h$ to $\A$.

\begin{dfn}
The elements of $G(\A)$, with $G = \Spec {\h}$ are the generalized
regularized Feynman rules for a QFT.
\end{dfn}

Regularized Feynman rules can be written as elements of $G(\A)$ for
some appropriately defined $\A$. These are the physical regularization
theories. The general elements of the renormalization group $\phi \in
G(\A)$ need not have any physical interpretation at all.

\begin{dfn}
The renormalization mass scale of a physical theory is represented by
$\R_+$. It is the energy scale at which a physical theory is evaluated. In
this paper, I follow the convention of \cite{CM06} and complexify the
energy scale, and write it $e^s \in \C^\times$ for $s \in \C$.
\end{dfn}

The regularized Feynman integrals are functions of the renormalization
mass scale. 

\begin{dfn}
The renormalization scale action describes the dependence of the
generalized regularized theory on the renormalization mass scale.
\end{dfn}


For example, consider $\phi_{dr} \in G(\A)$, the dimensionally
regularized Feynman rules for an (integer) $d$-dimensional scalar
QFT. Let $z$ be a complex parameter. For a given diagram $\Gamma$ with
$I(\Gamma)$ internal edges and $L(\Gamma)$ loops, \bas
\phi_{dr}(z)(\Gamma) =A(d+z)^l \int_0^\infty\prod_{k=1}^{I(\Gamma)}
\frac{1}{f_k(p_i,e_j)^2 + m^2} \prod_{i=1}^{L(\Gamma)}p_i^{d+z-1} dp_i
\;.\eas Momentum cutoff regularization in the same theory gives \bas
\varphi_{mc}(z)(\Gamma) =
\int_{-\frac{1}{z}}^{\frac{1}{z}}\prod_{k=1}^{I(\Gamma)}
\frac{1}{f_k(p_i,e_j)^2 + m^2} \prod_{i=1}^{L(\Gamma)} d^dp_i \; .\eas
The action of the renormalization scale maps the momenta $p_i
\rightarrow e^s p_i$ and thus \bas \phi_{dr}(z) &\mapsto
e^{sYz}\phi_{dr}(z) \\ \phi_{mc}(z) &\mapsto \phi_{dr}(e^sz) \;.\eas
Dimensionally regularized Feynman rules are elements of the group
$\phi_{dr}(z) \in G(\C[z^{-1}][[z]])$. Momentum cutoff Feynman rules
are in $\phi_{mc} \in G(\C[z^{-1}, \log (z)][[z]])$.  For more details
on this renormalization scale action, see \cite{CMbook, cutoffdimreg}.

\begin{dfn}
The action of the renormalization scale on a physical regularized QFT,
$\phi \in G(\A)$ defines a one parameter path in $G(\A)$. This is
called the renormalization flow of $\phi$.\end{dfn}

The action of the renormalization scale on a particular physical
$\phi$ can be extended to an action of the renormalization scale on
$G(\A)$.

\begin{dfn}
Let $\sigma$ be an action of $\C$ on $G(\A)$ \bas \sigma : \C\times
G(\A) & \rightarrow G(\A) \\ (s, \phi) & \rightarrow \sigma(s)(\phi)
\;.\eas
\end{dfn}

In the examples above, I extend the dependence of dimensional
regularization and momentum cutoff regularization to an generalized
regularized theories as $\sigma_{dr}(s) (\phi) = e^{sYz}\phi(z)$ and
$\sigma_{mc}(s) (\phi) = \phi(e^sz)$.

For physical reasons, one expects the paths defined by the
renormalization scale to be integral; they are related to the
solutions of the renormalization group equations, which describe the
dependence of the observables of the theory on the energy scale. To
mimick this mathematically, I am interested in extensions of the
renormalization flows of physical theories to an action on $G(\A)$
such that for each $\phi \in G(\A)$, the renormalization flow,
$\sigma(s) \phi$ is an integral path in $G(\A)$. In other words the
renormalization group action on $G(\A)$ defines a one parameter family
of diffeomorphisms on $G(\A)$.

\begin{dfn}
An action $\sigma$ on $G(\A)$ defines a renormalization group flow if
it generates a one parameter family of diffeomorphisms on $G(\A)$.
\end{dfn}

For the next theorem, let $\A = \C[z^{-1}, \log(z)][[z]]$. Both
$\phi_{dr}$ and $\phi_{mc}$ can be written as elements of $G(\A)$.

\begin{prop}
The actions $\sigma_{dr}$ and $\sigma_{mc}$ both define one parameter
families of diffeomorphism on $G(\A)$.
\end{prop}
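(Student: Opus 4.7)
The goal is to verify, for each of $\sigma=\sigma_{dr}$ and $\sigma=\sigma_{mc}$, four properties: that $\sigma(s)(\phi)\in G(\A)$ whenever $\phi\in G(\A)$, that $\sigma(0)=\mathrm{id}$, that $\sigma(s+t)=\sigma(s)\circ\sigma(t)$ (so that $\sigma(-s)$ is the inverse of $\sigma(s)$), and that $s\mapsto\sigma(s)(\phi)$ depends analytically on $s$. Together these give a one-parameter family of diffeomorphisms in the sense of the preceding definition.

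For $\sigma_{mc}$, the first step is to observe that $z\mapsto e^sz$ extends to an algebra automorphism $\tau_s$ of $\A=\C[z^{-1},\log(z)][[z]]$, since $\log(e^sz)=s+\log z$ and $(e^sz)^n=e^{sn}z^n$ both remain in $\A$. Then $\sigma_{mc}(s)(\phi)=\tau_s\circ\phi$ is a composition of algebra homomorphisms, hence in $G(\A)$, and the relations $\tau_0=\mathrm{id}$, $\tau_s\tau_t=\tau_{s+t}$ supply the remaining properties directly.

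For $\sigma_{dr}$, the key fact is that the grading operator $Y$ is a derivation on $\h$, so that $Y(\Gamma_1\Gamma_2)=Y(\Gamma_1)\Gamma_2+\Gamma_1Y(\Gamma_2)$ on homogeneous generators. For $\Gamma$ of loop number $n$ one has $(e^{sYz}\phi)(\Gamma)=e^{snz}\phi(\Gamma)\in\A$, and the homomorphism property of $\sigma_{dr}(s)(\phi)$ reduces to the identity $e^{s(n_1+n_2)z}\phi(\Gamma_1)\phi(\Gamma_2)=\bigl(e^{sn_1z}\phi(\Gamma_1)\bigr)\bigl(e^{sn_2z}\phi(\Gamma_2)\bigr)$, which is immediate once $\phi$ is a homomorphism. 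The group law, identity at $s=0$, and analyticity then follow from $e^{sYz}e^{tYz}=e^{(s+t)Yz}$.

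The main obstacle is making sense of ``diffeomorphism'' on the infinite-dimensional $G(\A)$. Since $\h$ is a connected graded Hopf algebra, $G$ is a pro-affine group scheme and $G(\A)$ is the inverse limit of the affine group schemes $G_{\le N}(\A)$ attached to the loop-number truncations $\h_{\le N}$. On each such truncation $\sigma(s)$ acts by a regular map whose coefficients are analytic functions of $s$, with $\sigma(-s)$ as a regular inverse; passing to the inverse limit produces the desired one-parameter family of diffeomorphisms. Everything else in the proof is essentially formal once these compatibilities with the projective system are in place.
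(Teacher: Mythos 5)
Your proof is correct and covers the same essential facts as the paper's, but it is organized differently and is more careful in two places. The paper also starts from the flow law $\sigma_*(s)\circ\sigma_*(u)=\sigma_*(s+u)$, but then splits the two cases: for $\sigma_{dr}$ it quotes the compatibility $e^{sYz}(\phi\star\psi)=e^{sYz}\phi\star e^{sYz}\psi$ with the convolution product (so that $\sigma_{dr}(s)$ is a group automorphism of $G(\A)$, hence bijective), while for $\sigma_{mc}$ it verifies surjectivity and injectivity by hand, exhibiting $\phi'_s(z)=\phi(e^{-s}z)$ as a preimage. You instead extract invertibility uniformly from $\sigma(0)=\mathrm{id}$ and $\sigma(s)\circ\sigma(-s)=\mathrm{id}$, which is cleaner and makes the two hand-checks in the paper redundant. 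You also verify explicitly that $\sigma_{dr}(s)\phi$ and $\sigma_{mc}(s)\phi$ remain algebra homomorphisms $\h\to\A$ (via $Y$ being a derivation, respectively via $z\mapsto e^sz$ being an automorphism of $\A$); the paper takes this for granted. Finally, you address what ``smooth'' and ``diffeomorphism'' mean on the infinite-dimensional $G(\A)$ by passing to the pro-affine structure, which the paper elides with the phrase ``since the action is smooth.'' One small correction there: the loop-number truncations $\h_{\le N}$ are not subalgebras of $\h$ (the product of two graphs of degree $\le N$ can exceed degree $N$), so the projective system should be built from the Hopf subalgebras \emph{generated} by the elements of degree $\le N$ (which are Hopf subalgebras by connectedness of the grading), with $G_N=\Spec$ of these. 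With that adjustment your argument goes through.
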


\begin{proof}
Let $* \in \{ mc, dr\}$. Since $\sigma_*$ is an action on
$G(\A)$, \begin{align*}\sigma_*(s)\circ \sigma_*(u)(\phi(z)) =
  \sigma_*(s+u) \phi(z) \;.\end{align*}

The action $\sigma_{dr}$ induces an automorphism on $G(\A)$
\cite{EM} \bas e^{sYz}(\phi \star \psi) = e^{sYz}\phi \star
e^{sYz}\psi \;. \eas Since the action is smooth, the result follows.

It is easy to check that $\sigma_{mc}$ is a smooth map. It remains to
check that it is bijective. To see surjectivity, notice that for any
fixed $s \in \C$ and any $\phi(z) \in G(\A)$, one can define
$\phi'_s(z) = \phi(e^{-s}z)$, and \bas \phi(z) = \sigma_{mc}(s)
\phi'_s(z)\;.\eas For injectivity, if there exists and $s \in \C$, and
$\phi, \psi \in G(\A)$, such that $\phi(sz) (\Gamma) =
\psi(sz)(\Gamma)$ for every $\Gamma \in \h$, then $\phi(z) (\Gamma) =
\psi(z)(\Gamma)$ for every $\Gamma \in \h$. This implies that $\phi(z)
= \psi(z)$.
\end{proof}

\begin{dfn}
The physical $\beta$ function for a renormalized QFT calculates the
dependence of the coupling constant on the renormalization scale \bas
\beta(g) = \frac{1}{\mu}\frac{d g}{d \mu} \;.\eas
\end{dfn}

The physical $\beta$ function is calculated perturbatively by loop number. In
this Hopf algebraic picture of renormalization, a related object
exists if the action $\sigma$ defines a renormalization group flow on
$G(\A)$. 

\begin{thm}
If $\sigma$ defines a renormalization group flow on $G(\A)$, it defines a
complete a vector field $X_\sigma \in \mathfrak{X}(G(\A))$.
\end{thm}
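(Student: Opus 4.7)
The plan is to produce $X_\sigma$ by the usual recipe for recovering an infinitesimal generator from a one‑parameter family of diffeomorphisms, and then to read off completeness directly from the group law $\sigma(s)\circ\sigma(u)=\sigma(s+u)$.

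First I would define $X_\sigma$ pointwise. At each $\phi \in G(\A)$, the map $s \mapsto \sigma(s)(\phi)$ is a smooth curve in $G(\A)$ passing through $\phi$ at $s=0$, so I set
\bas
X_\sigma(\phi) \;=\; \frac{d}{ds}\bigg|_{s=0} \sigma(s)(\phi) \;\in\; T_\phi G(\A)\;.
\eas
Concretely, since $G(\A)=\Hom_{k\text{-alg}}(\h,\A)$, a tangent vector at $\phi$ is an $\A$-valued $\phi$-derivation on $\h$, and for each $\Gamma\in\h$ the formula above gives $X_\sigma(\phi)(\Gamma) = \tfrac{d}{ds}|_{s=0}\sigma(s)(\phi)(\Gamma)\in\A$. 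Smoothness of the action $\sigma$ guarantees that the assignment $\phi\mapsto X_\sigma(\phi)$ is itself smooth, so $X_\sigma\in\mathfrak{X}(G(\A))$.

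Next I would verify that the curves $s\mapsto\sigma(s)(\phi)$ are the integral curves of $X_\sigma$. Using the one‑parameter group property,
\bas
\frac{d}{ds}\sigma(s)(\phi) \;=\; \frac{d}{du}\bigg|_{u=0}\sigma(s+u)(\phi) \;=\; \frac{d}{du}\bigg|_{u=0}\sigma(u)\bigl(\sigma(s)(\phi)\bigr) \;=\; X_\sigma\bigl(\sigma(s)(\phi)\bigr)\;,
\eas
so $\sigma(s)(\phi)$ really is the integral curve of $X_\sigma$ through $\phi$. Since $\sigma$ is defined for every $s\in\C$, every maximal integral curve extends to all of $\C$, which is precisely the statement that $X_\sigma$ is complete.

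The main subtlety, and the only real work beyond the routine one‑parameter group argument, is that $G(\A)$ is the $\A$-points of a pro‑algebraic affine group scheme rather than a finite‑dimensional manifold, so I must be careful about what $T_\phi G(\A)$ and $\mathfrak{X}(G(\A))$ mean. I would address this by working through the Hopf algebra: tangent vectors are identified with $\phi$-twisted derivations $\h\to\A$, and a vector field is an $\A$-linear derivation $\h\to k[G]\otimes\A$ (equivalently, a compatible family of such derivations indexed by $\phi$). Because $\h$ is graded and connected, this works loop‑by‑loop, so smoothness of the character $\phi\mapsto \sigma(s)(\phi)(\Gamma)$ for each fixed $\Gamma$ is enough to guarantee smoothness of $X_\sigma$ in the appropriate sense, and the differentiation above takes place in $\A$, not in any transcendental completion.
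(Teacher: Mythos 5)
Your proposal is correct and follows essentially the same route as the paper: define $X_\sigma$ by differentiating the flow, use the group law $\sigma(s)\circ\sigma(u)=\sigma(s+u)$ to see that $s\mapsto\sigma(s)(\phi)$ is the integral curve through $\phi$, and conclude completeness from the fact that $\sigma$ is defined for all $s\in\C$. The paper's own proof is a terse version of exactly this; your added care about what $T_\phi G(\A)$ and $\mathfrak{X}(G(\A))$ mean for the pro-algebraic group $G=\Spec\h$ is a welcome elaboration, not a different argument.
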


\begin{proof}
By hypothesis, $\sigma$ defines a one parameter family of
diffeomorphisms on $G(\A)$. Then $\sigma(s)\phi$ is an integral curve
in $G(\A)$ defined for all $s \in \C$, with $\sigma(0)\phi=
\phi$. Define a the vector field \bas X_\sigma(\sigma(s)\phi) =
\frac{d}{ds}\sigma(s)\phi\;.\eas This is is complete.
\end{proof}

\begin{dfn}
The geometric $\beta$ function for a renormalization group flow, $\sigma$,
is defined \bas \beta_\sigma: G(\A) & \rightarrow \g(\A) \\ \phi
& \rightarrow \phi^{-1} \star \frac{d}{ds}(\sigma(s)\phi) |_{s=0} =: \phi^{-1} \star X_\sigma(\phi)
\;.\eas
\end{dfn}

To see that $\beta_\sigma(\phi) \in \g(\A)$ for all $\phi \in G(\A)$,
note that $\beta_\sigma(\phi)$ is formed by left translating the
vector $X_{\sigma}(\phi) \in T_\phi G(\A)$ to $T_eG(\A) = \g$.

\begin{remark} 
In \cite{CK00}, the authors show that $z\beta_{\sigma_{dr}}(\phi) \in
\g(\C)$, and is the generator of the one parameter subgroup of $G(\A)$
defined $F_s(\phi) = \lim_{z\rightarrow 0}\phi^{-1}\star
\sigma_{dr}(s) \phi $. This is a happy accidental property of
dimensional regularization. It does not generalize to all
regularization schemes or regularization group actions.
\end{remark}

For more details on the geometric $\beta$ function, especially in the
case of $\sigma_{mc}$ and $\sigma_{dr}$, see \cite{cutoffdimreg}. In
the next section, we related the geometric $\beta$ function to the
Dynkin operator that appears in the study of dynamical systems.

\section{Generalized Dynkin operators and Geometric $\beta$ functions \label{Dynkin}}

Let $\mathcal{S}$ be a set and $k$ a field of characteristic $0$. Let
$V = k[\{\mathcal{S}\}]$ be the vector space generated by this
set. One can write $(V, [,])$ as a Lie algebra generated by $S$. The
$T(V)$, the tensor algebra on $V$, is the universal enveloping algebra
of $V$, $T(V) = \mathcal{U}(V)$. The classical (left) Dynkin operator
$D$ is a map \bas D : T(V) & \rightarrow (V, [,]) \\ x_1\otimes \cdots
\otimes x_n &\rightarrow [x_1, [\cdots,[x_{n-1},x_n]\cdots] \;.\eas
  Since $T(V) \simeq \mathcal{U}(V)$, $T(V)$ is isomorphic to a graded
  cocommutative Hopf algebra. Let $Y$ be the grading operator. The
  elements of $\mathcal{S}$ are primitive, which defines
  comultiplication. Multiplication is defined by concatenation. The
  antipode is defined \bas S(x_1\otimes \cdots \otimes x_n) =
  (-1)^nx_n\otimes \cdots \otimes x_1 \;.\eas Under this change of
  notation, the Dynkin operator $D = S \star Y$ \cite{Waldenfels} \bas
  S \star Y : \mathcal{U}(V) \rightarrow (V,[,]) \;.\eas The grading
  operator $Y$ is a derivation on $\mathcal{U}(V)$. Let $G =
  \textrm{exp}(V)$. The Baker-Campbell-Hausdorff (BCH) formula
  provides an inverse map from $G \rightarrow V$.  The Dynkin operator
  defines a closed form for the BCH formula \cite{EGP} \bas \log
  (\textrm{exp}X\textrm{exp}Y ) = \sum_{n>0}\frac {(-1)^{n-1}}{n}
  \sum_{ \begin{smallmatrix} {r_i + s_i > 0} \\ {1\le i \le
        n} \end{smallmatrix}} \frac{(\sum_{i=1}^n
    (r_i+s_i))^{-1}}{r_1!s_1!\cdots r_n!s_n!}  D( X^{r_1} Y^{s_1}
  X^{r_2} Y^{s_2} \ldots X^{r_n} Y^{s_n} )\;. \eas In fact, the Dynkin
  operator, $D$, defines a bijection from $G$ to $V$. I call this the
  Dynkin map.

In \cite{EGP}, the authors show that given any derivation $\delta$ on
a graded commutative Hopf algebra, $\h$ the map $D_\delta = S\star
\delta$ defines a bijection between $G(\A) = \Hom_{k\;alg}(\h, \A)$
and $\g(\A) = \textrm{Lie}(G(\A))$, by defining \bas D_\delta (\phi)
(\h) : = \phi (S \star \delta) ( \Delta (h)) \;. \eas It is easy to
check that the grading operator $Y$ is a derivation on $\h$. Using the
notation established in this paper, they show that
$z\beta_{\sigma_{dr}}$ corresponds to right composition by the Dynkin
map $D_Y = S \star Y$, \bas z\beta_{\sigma_{dr}}(\phi) = \phi^{-1} \star
Y\phi = \phi \circ D_Y \;. \eas  I generalize this
finding.

\begin{thm}
The geometric $\beta$ function, $\beta_\sigma$ is a generalized Dynkin
map, $D_{X_\sigma}$ from $G(\A)$ to $\g(\A)$ \bas
\beta_\sigma(\phi) = \phi^{-1} \star X_\sigma(\phi) = \phi
\circ D_{X_{\sigma}} \;. \eas \label{betaDynkin}
\end{thm}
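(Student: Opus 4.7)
The plan is to reduce the identity $\phi^{-1}\star X_\sigma(\phi) = \phi\circ D_{X_\sigma}$ to the same Sweedler computation that underlies the generalized Dynkin construction of \cite{EGP}, after first making sense of $D_{X_\sigma}$ as an operator on $\h$. Geometrically, $X_\sigma$ supplies at each $\phi\in G(\A)$ a tangent vector $X_\sigma(\phi)\in T_\phi G(\A)$, which algebraically is a $\phi$-derivation $\h\to\A$. The key reinterpretation I need is that this $\phi$-derivation factors as $X_\sigma(\phi)=\phi\circ \widetilde X_\sigma$ for a fixed derivation $\widetilde X_\sigma\colon\h\to\h$ (possibly after extending scalars to $\A$), so that the notation $D_{X_\sigma}:=S\star\widetilde X_\sigma$ from \cite{EGP} becomes literally meaningful.

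Given the lift, the identity is a one-line computation. Using $\phi^{-1}=\phi\circ S$ and writing $\Delta h=\sum h_{(1)}\otimes h_{(2)}$,
\bas
(\phi^{-1}\star X_\sigma(\phi))(h)
&=\sum \phi^{-1}(h_{(1)})\,X_\sigma(\phi)(h_{(2)}) \\
&=\sum \phi(S(h_{(1)}))\,\phi(\widetilde X_\sigma(h_{(2)})) \\
&=\phi\Bigl(\sum S(h_{(1)})\,\widetilde X_\sigma(h_{(2)})\Bigr) \\
&=\phi\bigl((S\star \widetilde X_\sigma)(h)\bigr)=(\phi\circ D_{X_\sigma})(h),
\eas
where the third equality uses that $\phi$ is a $k$-algebra map and the fourth is the definition of the convolution on $\h$. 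This specializes to $z\beta_{\sigma_{dr}}(\phi)=\phi\circ D_Y$ recorded in the excerpt when $\widetilde X_\sigma=zY$.

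The main obstacle is therefore producing $\widetilde X_\sigma$ and verifying that it is a derivation, so that the hypotheses of \cite{EGP} apply. I would extract it from a one-parameter family of Hopf algebra endomorphisms $\sigma^\sharp(s)\colon\h\to\h$ dual to $\sigma(s)$: since the previous proposition shows $\sigma(s)$ is an automorphism of $G(\A)$, naturality of $\Spec$ gives $\sigma(s)\phi=\phi\circ \sigma^\sharp(s)$, and differentiating at $s=0$ both produces $\widetilde X_\sigma$ and forces the Leibniz rule from the algebra-map property of $\sigma^\sharp(s)$. The check is immediate for $\sigma_{dr}$, where $\sigma_{dr}^\sharp(s)=e^{szY}$ and $\widetilde X_{\sigma_{dr}}=zY$; for $\sigma_{mc}$ the substitution $z\mapsto e^sz$ must be absorbed into an $\A$-linear derivation on the extended Hopf algebra $\h\otimes_k\A$, an issue requiring care but not introducing new conceptual difficulty. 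Once the lift is in place, the Sweedler computation above closes the proof.
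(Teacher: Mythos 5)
Your proposal is correct and follows essentially the same route as the paper: the paper likewise converts $X_\sigma$ into a derivation $\delta_\sigma$ on $k[G]\simeq\h$ (your $\widetilde X_\sigma$, so that $X_\sigma(\phi)=\phi\circ\delta_\sigma$), checks the Leibniz rule by differentiating the pointwise product of regular functions rather than via your dual maps $\sigma^\sharp(s)$, and then asserts the identity $\phi^{-1}\star X_\sigma(\phi)=\phi\circ(S\star\delta_\sigma)$ that your Sweedler computation spells out. Your explicit convolution calculation and your flagging of the scalar-extension issue for $\sigma_{mc}$ actually make the argument more complete than the paper's, which leaves both steps implicit.
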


\begin{proof}
The map $\beta_\sigma$ is defined by the vector field $X_\sigma$ on
$G(\A)$. Vector fields on a Lie group define derivations on the
algebra of regular functions on that group. Since $G = \Spec \h$, the
algebra of regular function $k[G] \simeq \h$. Therefore $X_\sigma$
defines a derivation on $\h$, call it $\delta_\sigma$. Specifically,
for $h \in k[G]$, \bas X_\sigma(\phi) \leftrightarrow
\delta_\sigma(h)(\phi) := \frac{d}{ds}h(\sigma_s(\phi))|_{s=0} \;.\eas
Recall that the product on $k[G]$ is defined pointwise \bas h h'(\phi)
= h(\phi) h'(\phi)\;. \eas It is easy to check that $\delta_\sigma$
is a derivation \bas \delta_\sigma(hh')(\phi) =
\frac{d}{ds}\left(hh'(\sigma_s(\phi))\right)|_{s=0} =
\frac{d}{ds}\left(h(\sigma_s(\phi))h'(\sigma_s(\phi))\right) =
\\ \frac{d}{ds}\left(h(\sigma_s(\phi))\right)|_{s=0}h'(\phi) +
h(\phi)\frac{d}{ds}\left(h(\sigma_s(\phi))\right)|_{s=0} =
(\delta_\sigma(h)h')(\phi) + (h\delta_\sigma(h'))(\phi) \;. \eas The
first equality is from the definition of $\delta_\sigma$, the second
from the definition of $k[G]$.  Under this set of definitions, \bas
\beta_\sigma(\phi)(h) = \phi^{-1}\star X_\sigma(\phi)(h) = \phi \circ
(S \star \delta_\sigma)(\Delta h)\;.\eas In other words, $\beta_\sigma
= \phi \circ D_{X_\sigma}$.
\end{proof}

\begin{remark}
Note that this implies that the geometric $\beta$ function
$\beta_\sigma$ defines a set bijection from $G(\A)$ to $\g(\A)$.
\end{remark}

\begin{corr}
The geometric $\beta$ function \bas \beta_\sigma : G(\A) \rightarrow
\g(\A)\;, \eas is defined by the Maurer-Cartan connection on the Lie
group $G(\A)$ contracted with $X_\sigma$.
\end{corr}

\begin{proof}
The Maurer-Cartan connection is a $\g(\A)$ valued one form defined
\bas \theta : \phi^{-1} \star d \phi \eas for $\phi \in
G(\A)$. Contracting with a vector field, $X_\sigma$ \bas \langle
X_\sigma(\phi) , \theta \rangle = \phi^{-1} \star X_\sigma(\phi) =
\beta_\sigma (\phi) \;. \eas
\end{proof}

In \cite{MP}, the authors define a generalization of the classical
Dynkin operator, $D_\delta = S \star \delta$ that is defined by a Lie
derivation $\delta$ on a free Lie algebra $\g$ \bas D_\delta :
\mathcal{U}(\g) \rightarrow \g \; .\eas It is a Lie idempotent in the
sense that if $x \in \g$, then $D_\delta (x) = \delta (x)$.

In the context of renormalization, the Hopf algebra of Feynman
diagrams $\h$, is of finite type. The Lie algebra $\g =
\textrm{Lie}(G)$ is freely generated, and the graded dual, $\h^\vee
\simeq \mathcal{U}(\g)$. Vector fields on $G$ exactly define Lie
derivatives on $\g$. This gives the following theorem.

\begin{thm}
The renormalization group flow defining action $\sigma$ defines a
generalized Dynkin operator in the sense of \cite{MP}.
\end{thm}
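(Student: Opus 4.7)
The plan is to package together what has already been established in the excerpt and then dualize. By the definition of a renormalization group flow, $\sigma$ induces a complete vector field $X_\sigma \in \mathfrak{X}(G(\A))$, and by the proof of Theorem \ref{betaDynkin} this vector field corresponds to a derivation $\delta_\sigma$ on $\h \simeq k[G]$ via $\delta_\sigma(h)(\phi) = \frac{d}{ds} h(\sigma_s(\phi))|_{s=0}$. So the starting point is already in place: we have a derivation $\delta_\sigma$ on the Hopf algebra $\h$ and a map $D_{\delta_\sigma} = S \star \delta_\sigma$ out of $\h$.

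Next I would transport $\delta_\sigma$ to the dual side. Because $\h$ is of finite type and the associated Lie algebra $\g$ is freely generated, graded duality gives $\h^\vee \simeq \mathcal{U}(\g)$ as graded Hopf algebras, with the antipode $S$ on $\h$ dualizing to the antipode on $\mathcal{U}(\g)$ appearing in the definition of the classical Dynkin operator. The coalgebra derivation $\delta_\sigma$ on $\h$ then transposes to an algebra derivation $\delta_\sigma^\vee$ on $\mathcal{U}(\g)$. I would then check that $\delta_\sigma^\vee$ preserves the Lie subspace $\g \subset \mathcal{U}(\g)$: this is exactly the content of the sentence ``vector fields on $G$ exactly define Lie derivatives on $\g$'' in the paragraph preceding the theorem, and it follows because $\g$ corresponds under duality to the primitive elements of $\h^\vee$, which are preserved by any Hopf-algebra derivation.

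Having produced a Lie derivation $\delta_\sigma^\vee$ on the free Lie algebra $\g$, I would identify the transposed operator $D_{\delta_\sigma}^\vee = S \star \delta_\sigma^\vee : \mathcal{U}(\g) \rightarrow \g$ with the generalized Dynkin operator of \cite{MP} associated to $\delta_\sigma^\vee$; this is a direct comparison of formulas once both sides are written in convolution form. The Lie-idempotence check, namely $D_{\delta_\sigma^\vee}(x) = \delta_\sigma^\vee(x)$ for $x \in \g$, follows from the fact that elements of $\g$ are primitive in $\mathcal{U}(\g)$, so $\Delta x = x \otimes 1 + 1 \otimes x$ and the convolution $S \star \delta_\sigma^\vee$ collapses onto $\delta_\sigma^\vee(x)$ using $\delta_\sigma^\vee(1) = 0$ and $S(1) = 1$.

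The step that requires the most care is the duality argument in the second paragraph: one must verify that the correspondence between derivations on $\h$ coming from vector fields on $G(\A)$ and Lie derivations on $\g$ genuinely lands inside the class of derivations used in \cite{MP} (i.e. preserves the Lie structure, not merely the associative one). Once this is established, everything else is a translation between convolution products on dual Hopf algebras, and the statement of the theorem reduces to recognizing $D_{\delta_\sigma^\vee}$ as an instance of the MP construction.
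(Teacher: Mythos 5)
Your argument is correct and shares its skeleton with the paper's proof: both pass from $\sigma$ to the complete vector field $X_\sigma$, then to the derivation $\delta_\sigma$ on $\h\simeq k[G]$, and then across the graded duality $\h^\vee\simeq\mathcal{U}(\g)$ to land in the setting of \cite{MP}. Where you differ is in how the derivation property on $\mathcal{U}(\g)$ is certified: the paper lifts the action $\sigma$ to $\g(\A)$ via the bijection $\beta_\sigma$ and verifies the Leibniz rule for the convolution product by direct computation, $\delta_\sigma(\alpha\star\gamma)=\delta_\sigma(\alpha)\star\gamma+\alpha\star\delta_\sigma(\gamma)$, whereas you obtain the same fact abstractly by transposing $\delta_\sigma$ across the duality. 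You also supply two verifications the paper leaves implicit: that the transposed derivation preserves the primitives $\g\subset\mathcal{U}(\g)$ (so that it genuinely is a Lie derivation of the free Lie algebra, as the definition in \cite{MP} requires), and that $S\star\delta_\sigma^\vee$ is Lie-idempotent, i.e.\ restricts to $\delta_\sigma^\vee$ on $\g$. One caution applies equally to both arguments: asserting that $\delta_\sigma$ is a \emph{coalgebra} derivation (equivalently, writing $\sigma(s)(\alpha\star\gamma)=\sigma(s)(\alpha)\star\sigma(s)(\gamma)$, as the paper's computation tacitly does) is not automatic for an arbitrary one-parameter family of diffeomorphisms of $G(\A)$; it holds for $\sigma_{dr}$ and $\sigma_{mc}$ because those flows act by group automorphisms, and this hypothesis should be made explicit in either version of the proof.
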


\begin{proof}
If the action $\sigma$ defines a renormalization group flow on
$G(\A)$, then it defines a one parameter family of diffeomorphisms on
$G(\A)$, and thus a complete vector field $X_\sigma \in
\mathfrak{X}(G(\A))$. The derivative $\delta_\sigma$ on $\h$ is exactly
the Lie derivative on $\g(\A)$ defined by $X_\sigma$.

The action $\sigma$ induces a path through $\g(\A)$ defined by the map
$\beta_\sigma$. Since $\beta_\sigma$ is a bijection from $G(\A)$ from
$\g(\A)$, for any $\alpha \in \g(\A)$, one can find a $\phi \in G(\A)$
such that $\alpha = \beta_\sigma(\phi)$. The action of $\sigma$ on
$G(\A)$ lifts to an action on $\g(\A)$ as \bas \sigma(s)(\alpha) =
\sigma(s)(\phi) \frac{d}{ds}(\sigma(s)(\phi)) \;.\eas The Lie
derivative $\delta_\sigma$ gives \bas \delta_\sigma(\alpha)(h) =
\alpha (\delta_\sigma(h)) = \frac{d}{ds}\sigma(s)(\alpha)(h) \;.  \eas
Let $\gamma \in \g(\A)$. Writing $\Delta(h) = \sum_{(h)} h_{(1)}
\otimes h_{(2)}$, \bas \delta_\sigma(\alpha\star \gamma)(h) =
\frac{d}{ds}(\sum_{(h)}\sigma(s)(\alpha)(h_{(1)})\sigma(s)(\gamma)(h_{(2)}))
\\ = \frac{d}{ds}(\sigma(s)(\alpha)) \star \sigma(s) (\gamma) (h) +
\sigma(s)(\alpha) \frac{d}{ds}(\sigma(s) (\gamma)) (h) =
\delta_\sigma(\alpha)\star \gamma (h) + \alpha \delta_\sigma(
\gamma)(h) \;.\eas
\end{proof}

To summarize, I relate the generalizations of the Dynkin map defined
in \cite{EGP} and the generalized Dynkin map defined in
\cite{MP}.

\begin{thm}
Each action $\sigma$ on $G(\A)$ that defines a one parameter family of
diffeomorphism on $G(\A)$ and thus induces the vector field
$X_\sigma$, defines a generalized Dynkin operator \bas D_{X_\sigma} :
\mathcal{U}(\g) \rightarrow \g \;.\eas The associated geometric
$\beta$ function $\beta_\sigma$ defines a generalized Dynkin map
defined by the Maurer-Cartan connection, $\theta$, \bas \beta_\sigma : G(\A) &
\rightarrow \g(\A) \\ \phi & \rightarrow <X_\sigma(\phi), \theta>
\;.\eas
\end{thm}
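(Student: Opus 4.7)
The plan is essentially to assemble the three preceding results---the theorem producing the complete vector field $X_\sigma$, Theorem~\ref{betaDynkin}, and its corollary---into the single omnibus statement, with one new check: that the $D_{X_\sigma}$ produced in the proof of Theorem~\ref{betaDynkin} on the Hopf algebra $\h$ is indeed the generalized Dynkin operator on $\mathcal{U}(\g)$ in the sense of \cite{MP}. So the first step is to apply the theorem that promotes a one-parameter family of diffeomorphisms to a complete vector field $X_\sigma \in \mathfrak{X}(G(\A))$; this step is a direct quotation of hypothesis-to-conclusion from the earlier result.

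The second step is to produce the generalized Dynkin operator $D_{X_\sigma}:\mathcal{U}(\g) \rightarrow \g$. The proof of Theorem~\ref{betaDynkin} already constructs a derivation $\delta_\sigma$ on $\h \simeq k[G]$ by $\delta_\sigma(h)(\phi) = \frac{d}{ds} h(\sigma(s)(\phi))|_{s=0}$ and sets $D_{X_\sigma} = S \star \delta_\sigma$ acting on $\h$. Because $\h$ is of finite type, graded duality yields $\h^\vee \simeq \mathcal{U}(\g)$, and the derivation $\delta_\sigma$ transports to a Lie derivation on $\g$ that extends by the Leibniz rule to $\mathcal{U}(\g)$. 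The Dynkin operator $D_\delta = S \star \delta$ of \cite{MP} is defined precisely for such data, so I would write out the pairing and confirm that the two constructions coincide under duality.

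The third step is the $\beta_\sigma$ identity. I would cite Theorem~\ref{betaDynkin} to get $\beta_\sigma(\phi) = \phi \circ D_{X_\sigma}$, which realizes $\beta_\sigma$ as a generalized Dynkin map, and then cite the Corollary to rewrite $\phi^{-1} \star X_\sigma(\phi)$ as the contraction $\langle X_\sigma(\phi), \theta \rangle$ of the Maurer-Cartan form $\theta = \phi^{-1} \star d\phi$ with $X_\sigma$. These two citations together give the displayed formula for $\beta_\sigma$.

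The main obstacle, and really the only nontrivial content beyond bookkeeping, is verifying that the Dynkin operator of \cite{EGP}---built as $S \star \delta_\sigma$ on the Hopf algebra $\h$---is the same object as the Lie-theoretic Dynkin operator of \cite{MP} on $\mathcal{U}(\g)$. I expect this to reduce to (i) compatibility of the convolution products on $\h$ and $\h^\vee \simeq \mathcal{U}(\g)$ under graded duality, (ii) the fact that $\delta_\sigma$ dualizes to the Lie derivative associated to $X_\sigma$ on $\g$, and (iii) Lie idempotency of $S \star \delta_\sigma$ on primitives. None of these are individually hard, but because the two Dynkin operators live on opposite sides of the duality it is worth spelling out the identification carefully so that the ``generalized Dynkin operator'' in the theorem statement is unambiguous.
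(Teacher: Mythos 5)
Your proposal matches the paper's treatment: the paper states this theorem purely as a summary, with no separate proof, relying on exactly the three preceding results you cite (the complete vector field theorem, Theorem~\ref{betaDynkin} with its corollary, and the theorem identifying $\delta_\sigma$ with a Lie derivation in the sense of \cite{MP}). The extra care you propose about transporting $S \star \delta_\sigma$ across the graded duality $\h^\vee \simeq \mathcal{U}(\g)$ is handled (somewhat more briefly) in the paper's preceding theorem, so your plan is, if anything, slightly more explicit than the source.
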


\section*{Acknowledgments}The author would like to thank Kurusch Ebrahimi-Fard,
  Joris Vankerschaver and Maria Barbero Li\~{n}an, for many useful
  discussions to develop the ideas in this paper, and is grateful for
  support from ICMAT via the Severo Ochoa Excellence grant to have
  these conversations.

\bibliographystyle{amsplain}
\bibliography{/home/mithu/bibliography/Bibliography}{}

\end{document}